\newcommand{\eps}{\varepsilon}
\theoremstyle{plain}
\newtheorem{thm}{Theorem}
\newtheorem{lemma}[thm]{Lemma}
\theoremstyle{definition}
\newtheorem{defn}[thm]{Definition}
\theoremstyle{remark}
\numberwithin{equation}{section}
\numberwithin{thm}{section}
\title[]{Improving the runtime of algorithmic polarization of hidden markov models}
\author[V. Bian]{Vincent Bian}
\address{V. Bian: Department of Physics, Massachusetts Institute of Technology, Cambridge, MA 02139}
\email{vinvinb@mit.edu}
\author[R. Madhukara]{Rachana Madhukara}
\address{R. Madhukara: Department of Mathematics, Massachusetts Institute of Technology, Cambridge, MA 02139}
\email{rachanam@mit.edu}
\begin{document}

\maketitle

\begin{abstract}
    We improve the runtime of the linear compression scheme for hidden Markov sources presented in a 2018 paper of Guruswami, Nakkiran, and Sudan. Under the previous scheme, compressing a message of length $n$ takes $O(n \log n)$ runtime, and decompressing takes $O(n^{1 + \delta})$ runtime for any fixed $\delta > 0.$ We present how to improve the runtime of the decoding scheme to $O(n \log n)$ by caching intermediate results to avoid repeating computation.
\end{abstract}

\section{Introduction}


The problem of designing coding schemes for stochastic channels is hard. In a recent paper of Guruswami, Nakkiran, and Sudan \cite{hmm_polar}, they consider hidden Markov channels, where the states evolve according to some Markov process. The ``hidden" refers to the fact that one cannot determine the state from the output of the channel. One important thing to note here is that error introduced by the channel is only dependent on the state of the Markov process.

In designing encoding and decoding algorithms for the problem of hidden Markov channels, Markovian sources become quite important due to a well-known correspondence relating error-correction for additive Markov chains to compression and decompression algorithms for Markovian sources (when the compression is linear). Recall that additive channels are the those which map inputs from some alphabet $\Sigma$ to outputs over $\Sigma$, where there is an abelian group defined on $\Sigma$ (needed for the additive purposes). The channel creates an error sequence from the associated Markov source which is independent of the input sequence. Then the output of the channel is the coordinatewise sum of the input sequence with the error sequence. Note that in \cite{hmm_polar}, the alphabet $\Sigma$ is taken to be a finite field of prime cardinality. More formally, 

\begin{defn}
An \textit{additive Markov channel} $\mathcal{C}_{\mathcal{H}}$, specified by a Markov source $\mathcal{H}$ over alphabet $\mathbb{F}_q$, is a randomized map $\mathcal{C}_{\mathcal{H}} : \mathbb{F}^*_q \rightarrow \mathbb{F}^*_q$ obtained as follows: On channel input $X_1,\dots,X_n$, the channel outputs $Y_1,\dots,Y_n$ where $Y_i=X_1+Z_i$ where $Z = (Z_1,\dots,Z_n) \sim \mathcal{H}_n$.  
\end{defn}

Now we define a Markovian source.

\begin{defn}
A (hidden) Markov source over some alphabet $\Sigma$ is a Markov chain on a finite state space where each state $s$ has some distribution associated to it. More formally, the sequence $Z_1, \dots, Z_n$ is a Markov source if and only if 
\begin{enumerate}
    \item there is some $k$-state time invariant Markov sequence $X_0, X_1, \dots, X_n, \dots$ which are generated by the transition matrix $M$;
    \item there are some $k$ distributions $P^{(1)}, \dots, P^{(k)} \in \Delta(\Sigma)$ such that $Z_i \sim P^{(X_i)}$. 
\end{enumerate}
\end{defn}

In \cite{hmm_polar}, a construction of codes for additive Markov channels is given. In particular, these constructions get $\eps$ close to capacity when compressing $n$ bits, where $n$ is polynomial in $1/\eps$ and in the mixing time of the Markov chain. Additionally, they show that one can compress $n$ bits to its entropy up to an additive factor of $\eps n$, where $n$ is once again polynomial in $1/\eps$. While the paper only considers additive channels, the authors (and we) believe that the results in the paper should extend to more general symmetric channels with careful bookkeeping and the consideration of conditional probabilities.  

In this paper, we discuss these results and summarize the main ideas from the given compression and decompression algorithms. Additionally, we improve the \textsc{Polar-Decompress} algorithm runtime to $O(n\log n)$.  

\section{Preliminaries}
\subsection{Notation}
We denote our polarization/mixing matrix as $M$ (of size $k \times k$), and we will consider applying $M$ to an input $t$ times, so the result of polarization is $P_m(Z) = M^{\otimes t} Z.$ We consider our alphabet $\Sigma$ to be $\mathbb{F}_q$ for some prime $q,$ so $M \in \mathbb{F}^{k \times k}_q.$ Additionally, the Markov source is denoted as $\mathcal{H}$. 

\subsection{Forward Algorithm}

The \textit{Forward Algorithm} is prevalent in the preprocessing, compression and decompression algorithms. Therefore, we provide an overview of what the algorithm does and how it works in this subsection.

Given a Markov source $\mathcal{H}$ with $\ell$ states, let the first $j$ samples be $Y_1, \ldots, Y_j \sim \mathcal{H}_n.$ For a setting $y_1, \ldots, y_{j - 1},$ the \emph{Forward Algorithm} computes the distribution of $Y_n$ given that $(Y_1, \ldots, Y_{j - 1}) = (y_1, \ldots, y_{j - 1}).$ The details of how it is implemented can be found under Algorithm A.1 of \cite{hmm_polar}, but the main idea is to successively compute the distribution of the underlying state of $\mathcal{H}$ after each sample via Bayesian updates. Each update takes time $O(\ell^2)$ (since there are roughly this many possible transitions to consider), and we must do $j$ of these updates, for a total runtime of $O(j \ell^2).$ This is implemented using dynamic programming. 

\section{Polar-Decompress Speed Up Process}

The following theorem is largely identical to Theorem~2.9 from \cite{hmm_polar}, except $\textsc{Polar-Decompress}$ is replaced with $\textsc{Fast-Polar-Decompress}$ and its runtime is reduced from $O(n^{3/2}\ell^2 + n\log n)$ to $O(n \log n).$

\begin{thm}
\label{thm:fast_polar_decompress}
For every prime $q$ and mixing matrix $M \in \mathbb{F}_{q}^{k\times k}$ there exists a preprocessing algorithm (\textsc{Polar-Preprocess}, Algorithm 6.3 in \cite{hmm_polar}), a compression algorithm (\textsc{Polar-Compress}, Algorithm 4.1 in \cite{hmm_polar}), a decompression algorithm (\textsc{Fast-Polar-Decompress}, Algorithm \ref{alg:fast_pol_dec}) and a polynomial $p(\cdot)$ such that for every $\epsilon > 0$, the following property holds:
\begin{enumerate}
    \item \textsc{Polar-Preprocess} is a randomized algorithm that takes as input a Markov source $\mathcal{H}$ with $\ell$ states, and $t \in \mathbb{N},$ and runs in time $\mathrm{poly}(n, \ell, 1/\varepsilon, q)$ where $n = k^{2t}$ and outputs auxiliary information for the compressor and decompressor (for $\mathcal{H}_n$).
    \item \textsc{Polar-compress} takes as input a sequence $Z \in \mathbb{F}_q^n$ as well as the auxiliary information output by the preprocessor, runs in time $\mathcal{O} (n\log n)$, and outputs a compressed string $\Tilde{U} \in \mathbb{F}_q^{\overline{H}(Z)+\epsilon n}$. Further, for every auxiliary input, the map $Z \to \Tilde{U}$ is a linear map. 
    \item \textsc{Fast-Polar-Decompress} takes as input a Markov source $\mathcal{H}$, a compressed string $\Tilde{U} \in \mathbb{F}_q^{\overline{H}(Z)+\epsilon n}$ and the auxiliary information output by the preprocessor, runs in time $\mathcal{O}(n\log n)$ and outputs $\hat{Z} \in \mathbb{F}_q^n.$ 
\end{enumerate}

The guarantee provided by the above algorithms is that with probability at least $1 - \exp(-\Omega(n)),$ the
Preprocessing Algorithm outputs auxiliary information $S$ such that
$$\Pr_{Z \sim \mathcal{H}_n} [\textsc{Polar-Decompress}(\mathcal{H}, S; \textsc{Polar-Compress}(Z; S)) \neq Z] \leq \mathcal{O}(\frac{1}{n^2}),$$
provided $n > p(\tau/\varepsilon)$ where $\tau$ is the mixing time of $\mathcal{H}.$
\end{thm}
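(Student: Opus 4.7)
The plan is to verify the three numbered properties together with the probability-of-error guarantee. Properties (1) and (2), concerning \textsc{Polar-Preprocess} and \textsc{Polar-Compress}, are verbatim from Theorem 2.9 of \cite{hmm_polar}, so they follow directly from the analysis in that paper; likewise, the decoding error probability depends only on how the preprocessor chooses the ``frozen'' coordinates and on the correctness of the conditional-probability computations used by the decompressor, neither of which is affected by a pure implementation speedup. The substantive claim to prove is therefore the $O(n \log n)$ runtime bound for \textsc{Fast-Polar-Decompress}.

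To establish this, I would first locate the source of the $O(n^{3/2} \ell^{2})$ term in the original \textsc{Polar-Decompress}. The decoder walks through the polarized coordinates $U_{1}, \ldots, U_{n}$ (in successive-cancellation fashion) and, at each step, needs a conditional distribution whose computation invokes the Forward Algorithm on a prefix of $Z$. As noted in Section~2.2, a single Forward-Algorithm call on a prefix of length $j$ costs $\Theta(j \ell^{2})$, so the naive running sum across all $n$ decoded coordinates already exceeds $\Omega(n^{2})$; the $O(n^{3/2} \ell^{2})$ figure reflects a partial optimization (e.g.\ recomputing the Forward state only every $\sqrt{n}$ symbols).

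The improvement, which I will justify when presenting Algorithm \ref{alg:fast_pol_dec}, is to maintain the Forward Algorithm incrementally. Concretely, \textsc{Fast-Polar-Decompress} carries a single cached length-$\ell$ belief vector representing the posterior distribution over the hidden state after the currently-committed prefix of $Z$. Each time the successive-cancellation procedure determines one new symbol of $Z$, this cache is updated with one Bayesian step at cost $O(\ell^{2})$, never recomputed from scratch. Summed over the $n$ symbols, the total Forward-Algorithm cost is $O(n \ell^{2}) = O(n)$ for fixed $\mathcal{H}$. Adding the $O(n \log n)$ arithmetic incurred by the recursive polar-transform evaluation gives the claimed $O(n \log n)$ bound.

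The main obstacle is checking that caching is actually consistent with the query pattern of the decoder: I must argue that the Forward-Algorithm states required by successive-cancellation decoding appear in a non-decreasing-prefix order, so that each query extends the cache by a short suffix rather than forcing a restart. I expect this to follow from the fact that \textsc{Polar-Compress} and its inverse process positions left to right in the natural index order on $Z$, but I will verify it explicitly by tracing the recursion in Algorithm \ref{alg:fast_pol_dec} and confirming that whenever the recursion descends to a leaf corresponding to position $j$ of $Z$, all positions $1, \ldots, j-1$ have already been fixed and incorporated into the cache.
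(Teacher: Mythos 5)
Your core idea --- maintain the Forward Algorithm state incrementally, paying $O(\ell^2)$ per newly committed symbol instead of recomputing from scratch --- is exactly the mechanism behind \textsc{Fast-Polar-Decompress}, and your cost accounting ($O(n\ell^2)$ total for the Bayesian updates plus $O(n\log n)$ for the decoding work) matches the paper's runtime lemma. However, your instantiation of the cache has a structural error. In this construction $Z$ is not decoded as a single length-$n$ chain in left-to-right order: it is arranged as an $m\times m$ matrix ($n=m^2$), each \emph{row} $i$ is treated as its own Markov-source prefix, and the decoder proceeds \emph{column by column}, calling \textsc{Fast-Decoder} once per column $j$ with the whole vector of conditional distributions $\mathcal{D}_{z^j_i\mid z^{<j}_i}$ over all $m$ rows $i$ as input. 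So the Forward-state queries are interleaved across $m$ different chains: at column $j$ you need, simultaneously, the posterior for every row $i$ after its length-$(j-1)$ prefix. A single cached belief vector over ``the currently-committed prefix of $Z$'' cannot answer these queries; you need $m$ caches $v_{i,j}$, one per row, each advanced by one Bayesian step per column --- which is precisely what Algorithm \ref{alg:fast_pol_dec} does (and the total cost $m\cdot m\cdot O(\ell^2)=O(n\ell^2)$ is unchanged). Your own ``main obstacle'' check --- that queries arrive in non-decreasing-prefix order with respect to the natural index order on $Z$ --- would in fact fail as stated, and resolving it forces the per-row cache design.

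Two smaller points. First, your diagnosis of the $O(n^{3/2}\ell^2)$ term in \textsc{Polar-Decompress} is incorrect: it is not a ``partial optimization recomputing every $\sqrt{n}$ symbols,'' but a direct consequence of the matrix structure --- each of the $m$ columns triggers $m$ from-scratch \textsc{ForwardInfer} calls on prefixes of length at most $m$, giving $m\cdot m\cdot O(m\ell^2)=O(n^{3/2}\ell^2)$. Second, asserting that the error guarantee is unaffected by ``a pure implementation speedup'' is not by itself a proof; the paper discharges this obligation by an induction on $j$ showing that the cached vectors $v_{i,j-1}$ coincide with the vectors $s_{j-1}$ that \textsc{ForwardInfer} would recompute, hence that both decompressors feed identical distributions to \textsc{Fast-Decoder} and therefore produce identical outputs, letting the error bound of \cite{hmm_polar} transfer verbatim. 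Your planned cache-consistency verification, once carried out for the per-row caches, is exactly this induction, so your outline is completable --- but only after the single-cache picture is corrected.
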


Please refer to the Appendix for a complete description of \textsc{Polar-Decompress} and the Forward Algorithm. Additionally, note that \textsc{Polar-decompress} makes black-box use of the \textsc{Fast-decoder} algorithm from \cite[Algorithm 4]{strong_polar}.

We now present the optimized version of \textsc{Polar-Decompress}, which effectively caches the intermediate values computed in \textsc{ForwardInfer}. Note that \textsc{Polar-Decompress} and \textsc{ForwardInfer} are written with different notation, and we will use the notation from \textsc{Polar-Decompress}.

\begin{algorithm}
\caption{\textsc{Fast-Polar-Decompress}} \label{alg:fast_pol_dec}
\KwConstant{$M \in \mathbb{F}^{k \times k}_q,$ $m = k^t,$ $n = m^2$}
\KwInput{Markov source $\mathcal{H}$ with state space $[\ell]$ and stationary distribution $\pi,$ and compressed strings $U^1_{S_1}, U^2_{S_2}, \ldots, U^m_{S_m} \in \mathbb{F}^{m}_q$}
\KwOutput{$\hat{Z} \in \mathbb{F}^{m \times m}_q$}
\Procedure{$\textsc{Fast-Polar-Decompress}(\mathcal{H}; U^1_{S_1}, U^2_{S_2}, \ldots, U^m_{S_m})$}{
  \ForEach{$i \in [m]$}{
    Initialize $v_{i, 0} \gets \pi$ \;
  }
  \ForEach{$j \in [m]$}{
    \eIf{$j \leq (1 - \varepsilon)m$}{
      \ForEach{$i \in [m]$}{
        Compute $\mathcal{D}_{z^j_i | z^{<j}_i} := \mathbb{E}_{z \sim \Pi v_{i, j - 1}}[\mathcal{S}_z]$ \;
      }
      
      Define $U^j \in (\mathbb{F}_q \cup \{\bot\})^m$ by filling unspecified entries of $U^j_{S_j}$ with $\bot$ \;
      Set $\hat{Z}^j \gets \textsc{Fast-Decoder}(\mathcal{D}_{z^j | z^{<j}} ; U^j)$ \;
      
      \ForEach{$i \in [m]$}{
        Define $v_{i, j} \in \Delta([\ell])$ by $v_{i, j}(z) \gets \dfrac{(\Pi v_{i, j - 1})_z \cdot \mathcal{S}_z(\hat{Z}_i^j)}{\sum_{y \in [\ell]} (\Pi v_{i, j - 1})_y \cdot \mathcal{S}_y(\hat{Z}_i^j)},$ treating $v_{i, j - 1}$ as a vector in $\mathbb{R}^{\ell}$ \;
      }
    }{
      Set $\hat{Z}^j \gets (M^{-1})^{\otimes t} U^j_{S_j}$
    }
  }
  \Return $\hat{Z}$
}
\end{algorithm}

Now, we prove Theorem~\ref{thm:fast_polar_decompress}.
\begin{proof}
    The proofs of properties (1) and (2) are the same as in \cite{hmm_polar}. To show property (3), it suffices to show that $\textsc{Fast-Polar-Decompress}$ runs in time $O(n \log n).$ To prove the correctness claim at the end, it suffices to show $\textsc{Fast-Polar-Decompress}$ produces the same output as $\textsc{Polar-Decompress}.$ On each iteration of the main for loop, both versions of the algorithm set $\hat{Z}^j$ to $\textsc{Fast-Decoder}(\mathcal{D}_{z^j | z^{<j}} ; U^j),$ so it suffices to show that both algorithms compute the same value for $\mathcal{D}_{z^j_i | z^{<j}_i},$ for all $i$ and $j.$

    We prove the following two statements by induction on $j,$ for $j \leq (1 - \varepsilon)m$:
    \begin{enumerate}
        \item When the subroutine $\textsc{ForwardInfer}(\mathcal{H}; 1, \hat{Z}^{<j}_i)$ is called in line 4 of $\textsc{Polar-Decompress},$ the value of $s_{j - 1}$ computed in $\textsc{ForwardInfer}$ is equal to the value of $r_{i, j - 1}$ computed in $\textsc{Fast-Polar-Decompress}$
        \item Both $\textsc{Polar-Decompress}$ and $\textsc{Fast-Polar-Decompress}$ compute the same value for $\mathcal{D}_{z^j_i | z^{<j}_i}$
    \end{enumerate}
    
    For the base case $j = 1,$ $\textsc{Polar-Decompress}$ sets $\mathcal{D}_{z^j_i | z^{<j}_i}$ to $\textsc{ForwardInfer}(\mathcal{H}; 1, \hat{Z}^{<1}_i).$ When $\textsc{ForwardInfer}$ is called, it sets $s_0$ to the stationary distribution $\pi,$ which is also the value that $v_{i, 0}$ gets initialized to. This proves the first half of the base case. After $s_0$ is initialized, $\textsc{ForwardInfer}$ skips the for loop entirely, sets $s_1$ to $\Pi s_0 = \pi,$ and returns $\mathbb{E}_{i\sim s_1}[\mathcal{S}_i] = \mathbb{E}_{i \sim \pi}[\mathcal{S}_i].$

    Meanwhile, when $j = 1,$ $\textsc{Fast-Polar-Decompress}$ directly sets $\mathcal{D}_{z^j_i | z^{<j}_i}$ to $\mathbb{E}_{z \sim \Pi v_{i, j - 1}}[\mathcal{S}_z].$ Since $v_{i, j - 1} = v_{i, 0} = \pi,$ and $\Pi \pi = \pi,$ we have $\mathcal{D}_{z^j_i | z^{<j}_i} = \mathbb{E}_{z \sim \pi}[\mathcal{S}_z],$ the same value as in $\textsc{Polar-Decompress}.$ This completes the base case.

    In $\textsc{Polar-Decompress},$ the entries of $\hat{Z}$ are never changed after they are set. Consider the computation in the for loop in $\textsc{ForwardInfer}.$ Each value of $s_t$ only depends on $s_{t - 1},$ and the value of $y_t.$ When $\textsc{ForwardInfer}$ is called, $y$ is set to $\hat{Z}^{<j}.$ Thus, the inputs $y_1, y_2, \ldots, y_{j - 2}$ when $\textsc{ForwardInfer}(\mathcal{H}; j - 1, \hat{Z}^{<j}_i)$ is called will be the same as when $\textsc{ForwardInfer}(\mathcal{H}; j - 1, \hat{Z}^{<j}_i)$ is called in the previous iteration. This means that values of $s_0$ through $s_{j - 2}$ will be the same as well. By the inductive hypothesis, this means $s_{j - 2} = r_{i, j - 2}.$

    Thus, $s_{j - 1}$ is computed using
    $$s_{j - 1}(z) \gets \dfrac{(\Pi s_{j - 2})_z \cdot \mathcal{S}_z(y_t)}{\sum_{j \in [\ell]} (\Pi s_{j - 2})_j \cdot \mathcal{S}_j(y_t)} = \dfrac{(\Pi s_{j - 2})_z \cdot \mathcal{S}_z(\hat{Z}^{j - 1}_i)}{\sum_{j \in [\ell]} (\Pi s_{j - 2})_j \cdot \mathcal{S}_j(\hat{Z}^{j - 1}_i)},$$ where we have substituted in the value of the input $y_t.$ This is exactly the same expression as the computation for $v_{i, j - 1}(z)$ in line 11 of $\textsc{Fast-Polar-Decompress}$, just with some of the indices named differently. This proves the first statement in the inductive hypothesis.

    To show the second statement of the inductive hypothesis, note that $\textsc{ForwardInfer}(\mathcal{H}; j - 1, \hat{Z}^{<j}_i)$ outputs $\mathbb{E}_{z \sim s_j}[\mathcal{S}_z],$ and $s_j := \Pi s_{j - 1} = \Pi v_{i, j - 1}.$ However, $\mathbb{E}_{z \sim \Pi v_{i, j - 1}}[\mathcal{S}_z]$ is exactly the value $\textsc{Fast-Polar-Decompress}$ assigns to $\mathcal{D}_{z^j_i | z^{<j}_i}$ in line 7. This completes the inductive step. Hence, $\textsc{Fast-Polar-Decompress}$ and $\textsc{Polar-Decompress}$ produce the same output.
\end{proof}

Finally, we show that the runtime of $\textsc{Fast-Polar-Decompress}$ is indeed faster than $\textsc{Polar-Decompress}$.  

\begin{lemma}
The runtime of $\textsc{Fast-Polar-Decompress}$ is $O(n \log n)$.  
\end{lemma}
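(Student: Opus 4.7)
The plan is to charge the total work of $\textsc{Fast-Polar-Decompress}$ to the $m$ iterations of its outer loop and show that each iteration costs $O(m \log m + m \ell^2)$, so that the total comes out to $O(m^2 \log m + m^2 \ell^2) = O(n \log n)$ once we recall $m = \sqrt{n}$ and treat $\ell$, $k$, $q$ as constants, as is done in \cite{hmm_polar}.

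First I would inspect the ``if'' branch. For each $i$, computing $\Pi v_{i, j - 1}$ is a single matrix-vector product at cost $O(\ell^2)$, and forming the expectation $\mathbb{E}_{z \sim \Pi v_{i, j - 1}}[\mathcal{S}_z]$ adds only $O(\ell)$ more; summing over the $m$ choices of $i$ gives $O(m \ell^2)$ per iteration. The subsequent update of $v_{i, j}$ reuses the already-computed vector $\Pi v_{i, j - 1}$, so it adds just $O(m \ell)$ more per iteration. This is precisely where the speedup over $\textsc{Polar-Decompress}$ comes from: caching $v_{i, j - 1}$ replaces a fresh $\Theta(j \ell^2)$-time invocation of $\textsc{ForwardInfer}$ by a single $O(\ell^2)$ Bayesian update, collapsing the total contribution of these updates from $\Theta(n^{3/2} \ell^2)$ down to $O(n \ell^2)$.

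Next I would bound the two remaining contributions. By \cite[Algorithm 4]{strong_polar}, each call to $\textsc{Fast-Decoder}$ on a length-$m$ input runs in $O(m \log m)$, yielding $O(m^2 \log m) = O(n \log n)$ across all iterations. In the ``else'' branch, the inverse polarization $(M^{-1})^{\otimes t} U^j_{S_j}$ can be applied in $O(m \log m)$ time via the standard tensor-product butterfly that is already used to apply $M^{\otimes t}$ inside $\textsc{Polar-Compress}$, so this branch also contributes at most $O(n \log n)$ in aggregate.

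Adding the three contributions yields the claimed $O(n \log n)$ bound. I do not expect a serious obstacle: the one step that requires care is checking that each $v_{i, j}$ really can be advanced from $v_{i, j - 1}$ by a single $O(\ell^2)$ Bayesian update, and this is immediate from the correctness argument in the proof of Theorem~\ref{thm:fast_polar_decompress}, which identifies $v_{i, j - 1}$ with the state $s_{j - 1}$ that $\textsc{ForwardInfer}$ would have computed on the corresponding input.
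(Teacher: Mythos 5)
Your proof is correct and follows essentially the same line-by-line accounting as the paper: $O(\ell^2)$ per Bayesian update (constant in $m$), $O(m\log m)$ per \textsc{Fast-Decoder} call, times $O(m)$ outer iterations, giving $O(m^2\log m) = O(n\log n)$. If anything, yours is slightly more complete, since you also bound the else branch (applying $(M^{-1})^{\otimes t}$ via the tensor-product butterfly in $O(m\log m)$), a cost the paper's proof leaves implicit.
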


\begin{proof}
Line 8 takes $O(m),$ and line 9 takes $O(m \log m)$ \cite{strong_polar}. The runtimes of lines 7 and 11 are both proportional to $\ell^2,$ which is constant with respect to $m.$ Thus, the inner for loops both contribute $O(m)$ each time they are run. Since lines 6 through 11 are run $(1 - \varepsilon) m$ times, the total runtime is $O(m^2 \log m) = O(n \log n),$ as desired.
\end{proof}

\section{Acknowledgements}

This work began as a final project in a Harvard class on Information Theory in Computer Science (CS229r, Fall 2022). We thank Professor Madhu Sudan for his guidance on this project. We also thank Chi-Ning Chou and R. Emin Berker for helpful suggestions.

\nocite{*}
\bibliographystyle{alpha}
\bibliography{biblio.bib}

\newpage

\appendix

\section{Polar Decompress}

\begin{algorithm}
\caption{\textsc{Polar-Decompress} (Algorithm 4.2 from \cite{hmm_polar})} \label{alg:pol_dec}
\KwConstant{$M \in \mathbb{F}^{k \times k}_q,$ $m = k^t,$ $n = m^2$}
\KwInput{Markov source $\mathcal{H}$ and $U^1_{S_1}, U^2_{S_2}, \ldots, U^m_{S_m} \in \mathbb{F}^{m}_q$}
\KwOutput{$\hat{Z} \in \mathbb{F}^{m \times m}_q$}
\Procedure{$\textsc{Polar-Decompress}(\mathcal{H}; U^1_{S_1}, U^2_{S_2}, \ldots, U^m_{S_m})$}{
  \ForEach{$j \in [m]$}{
    \eIf{$j \leq (1 - \varepsilon)m$}{
      Compute the conditional distribution $\mathcal{D}_{z^j | z^{<j}}$ of $\overline{Z}^j$ given $\overline{Z}^{<j} = \hat{Z}^{<j},$ using the Forward Algorithm on the Markov source $\mathcal{H}$ \;
      Define $U^j \in (\mathbb{F}_q \cup \{\bot\})^m$ by filling unspecified entries of $U^j_{S_j}$ with $\bot$ \;
      Set $\hat{Z}^j \gets \textsc{Fast-Decoder}(\mathcal{D}_{z^j | z^{<j}} ; U^j)$ \;
    }{
      Set $\hat{Z}^j \gets (M^{-1})^{\otimes t} U^j_{S_j}$
    }
  }
  \Return $\hat{Z}$
}
\end{algorithm}

\section{Forward Algorithm}

\begin{algorithm}
\caption{Forward Algorithm (Algorithm A.1 from \cite{hmm_polar})} \label{alg:forward}
\KwInput{$n \in \mathbb{N}.$ Markov source $\mathcal{H}$ with state space $[\ell],$ alphabet $\Sigma,$ stationary distribution $\pi \in \Delta([\ell]),$ transition matrix $\Pi \in \mathbb{R}^{\ell \times \ell},$ and output distributions $\{\mathcal{S}_i \in \Delta(\Sigma)\}_{i \in [\ell]}.$ And $y = (y_1, y_2, \ldots, y_{n - 1})$ for $y_i \in \Sigma.$}
\KwOutput{Distribution $Y_n \in \Delta(\Sigma)$}
\Procedure{$\textsc{ForwardInfer}(\mathcal{H} = (\ell, \Sigma, \pi, \Pi, \{\mathcal{S}_i\}); n, y)$}{
  $s_0 \gets \pi$ \;
  \ForEach{$t = 1, 2, \ldots, n - 1$}{
    Define $s_t \in \Delta([\ell])$ by $s_t(i) \gets \dfrac{(\Pi s_{t - 1})_i \cdot \mathcal{S}_i(y_t)}{\sum_{j \in [\ell]} (\Pi s_{t - 1})_j \cdot \mathcal{S}_j(y_t)},$ treating $s_{t - 1}$ as a vector in $\mathbb{R}^{\ell}.$ \;
  }
  $s_n \gets \Pi s_{n - 1}$ \;
  \Return The distribution $Y_n := \mathbb{E}_{i \sim s_n}[\mathcal{S}_i]$
}
\end{algorithm}

\end{document}